\documentclass[runningheads,envcountsame,a4paper]{llncs}
\usepackage[T1]{fontenc}
\usepackage[latin9]{inputenc}
\synctex=-1
\usepackage{array}
\usepackage{amsmath}
\usepackage{amssymb}
\usepackage{graphicx}

\makeatletter

\newcommand{\noun}[1]{\textsc{#1}}
\providecommand{\tabularnewline}{\\}

\usepackage{amsmath}
\usepackage{epstopdf}

\makeatother

\begin{document}
\title{Complexity of Road Coloring with Prescribed Reset Words} 
\titlerunning{Complexity of Road Coloring with Prescribed Reset Words}
\toctitle{Complexity of Road Coloring with Prescribed Reset Words}

\author{Vojt\v{e}ch Vorel\inst{1}\thanks{Supported by the Czech Science Foundation grant GA14-10799S.} \and Adam Roman\inst{2}\thanks{Supported in part by Polish MNiSW grant IP 2012 052272.}} 

\authorrunning{Vojt\v{e}ch Vorel, Adam Roman} 
\tocauthor{Vojt\v{e}ch Vorel, Adam Roman} 

\institute{Faculty of Mathematics and Physics, Charles University, Malostransk\'{e} n\'{a}m. 25, Prague, Czech Republic,\\ \email{vorel@ktiml.mff.cuni.cz}, 
\and Institute of Computer Science, Jagiellonian University,  Lojasiewicza 6, 30-348 Krakow, Poland, \\ \email{roman@ii.uj.edu.pl}}
 
\maketitle  
\setcounter{footnote}{0} 
\begin{abstract} 

By the Road Coloring Theorem (Trahtman, 2008), the edges of any given aperiodic directed multigraph with a constant out-degree can be colored such that the resulting automaton admits a reset word. There may also be a need for a particular reset word to be admitted. For certain words it is NP-complete to decide whether there is a suitable coloring. For the binary alphabet, we present a classification that separates such words from those that make the problem solvable in polynomial time. The classification differs if we consider only strongly connected multigraphs. In this restricted setting the classification remains incomplete.
\end{abstract}    
\keywords{Algorithms on Automata and Words, Road Coloring Theorem, Road Coloring Problem, Reset Word, Synchronizing Word}

\section{Introduction}

Questions about synchronization of finite automata have been studied
since the early times of automata theory. The basic concept is very
natural: we want to find an input sequence that would get a given
machine to a unique state, no matter in which state the machine was
before. Such sequence is called a \emph{reset word}. If an automaton
has a reset word, we call it a \emph{synchronizing} \emph{automaton}. 

In the study of \emph{road coloring}, synchronizing automata are created
from directed multigraphs through edge coloring. A directed multigraph
is\emph{ }said to be\emph{ admissible}, if it is aperiodic and has
a constant out-degree. A multigraph needs to be admissible in order
to have a synchronizing coloring. Given an alphabet $I$ and an admissible
graph with out-degrees $\left|I\right|$, the following questions
arise:
\begin{enumerate}
\item Is there a coloring such that the resulting automaton has a reset
word?
\item Given a number $k\ge1$, is there a coloring such that the resulting
automaton has a reset word of length at most $k$?
\item Given a word $w\in I^{\star}$, is there a coloring such that $w$
is a reset word of the resulting automaton?
\item Given a set of words $W\subseteq I^{\star}$, is there a coloring
such that some $w\in W$ is a reset word of the resulting automaton?
\end{enumerate}
For the first question it was conjectured in 1977 by Adler, Goodwyn,
and Weiss \cite{ADL1} that the answer is always \emph{yes}. The conjecture
was known as the \emph{Road Coloring Problem} until Trahtman \cite{TRA6}
in 2008 found a proof, turning the claim into the \emph{Road Coloring
Theorem}.

The second question was initially studied in the paper \cite{ROM8conf}
presented at LATA 2012, while the yet-unpublished papers \cite{ROM8}
and \cite{VO3} give closing results: The problem is NP-complete for
any fixed $k\ge4$ and any fixed $\left|I\right|\ge2$. The instances
with $k\le3$ or $\left|I\right|=1$ can be solved by a polynomial-time
algorithm. 

The third question is the subject of the present paper. We show that
the problem becomes NP-complete even if restricted to $\left|I\right|=2$
and $w=abb$ or to $\left|I\right|=2$ and $w=aba$, which may seem
surprising. Moreover, we provide a complete classification of binary
words: The NP-completeness holds for $\left|I\right|=2$ and any $w\in\left\{ a,b\right\} ^{\star}$
that does not equal $a^{k}$, $b^{k}$, $a^{k}b$, nor $b^{k}a$ for
any $k\ge1$. On the other hand, for any $w$ that matches some of
these patterns, the restricted problem is solvable in polynomial time. 

The fourth question was raised in \cite{ROM8} and it was emphasized
that there are no results about the problem. Our results about the
third problem provide an initial step for this direction of research.

It is an easy but important remark that the Road Coloring Theorem
holds generally if and only if it holds for strongly connected graphs.
It may seem that strong connectivity can be safely assumed even if
dealing with other problems related to road coloring. Surprisingly,
we show that this does not hold for complexity issues. If P is not
equal to NP, the complexity of the third problem for strongly connected
graphs differs from the basic third problem in the case of $w=abb$.
However, for the strongly connected case we are not able to provide
a complete characterization as described above, we give only partial
results.

Due to the page limit, some proofs are omitted or shortened. The results
are presented in Sections \ref{sec:A-Complete-Classification} and
\ref{sec:A-Partial-Classification}.

\section{\label{sec:Preliminaries}Preliminaries}

\subsection{Automata and Synchronization}

For $u,w\in I^{\star}$ we say that $u$ is a \emph{prefix}, a \emph{suffix},
or a \emph{factor }of $w$ if $w=uv$, $w=vu$, or $w=vuv'$ for some
$v,v'\in I^{\star}$, respectively.

A \emph{deterministic finite automaton }is a triple $A=\left(Q,I,\delta\right)$,
where $Q$ and $I$ are finite sets and $\delta$ is an arbitrary
mapping $Q\times I\rightarrow Q$. Elements of $Q$ are called \emph{states},
$I$ is the \emph{alphabet}. The \emph{transition function} $\delta$
can be naturally extended to $Q\times I^{\star}\rightarrow Q$, still
denoted by $\delta$, slightly abusing the notation. We extend it
also by defining 
\[
\delta\!\left(S,w\right)=\left\{ \delta\!\left(s,w\right)\mid s\in S\right\} 
\]
 for each $S\subseteq Q$ and $w\in I^{\star}$. If $A=\left(Q,I,\delta\right)$
is fixed, we write $r\overset{x}{\longrightarrow}\, s$ instead of
$\delta\left(r,x\right)=s$.

For a given automaton $A=\left(Q,I,\delta\right)$, we call $w\in I^{\star}$
a \emph{reset word} if $\left|\delta\!\left(Q,w\right)\right|=1$.
If such a word exists, we call the automaton \emph{synchronizing}.
Note that each word having a reset word as a factor is also a reset
word.

\subsection{Road Coloring}

In the rest of the paper we use the term \emph{graph }for a directed
multigraph. A graph is: 
\begin{enumerate}
\item \emph{aperiodic},\emph{ }if $1$ is the only common divisor of all
the lengths of cycles,
\item \textit{admissible}\textit{\emph{,}}\textit{ }\textit{\emph{if it
is aperiodic and all its out-degrees are equal,}}
\item \textit{road colorable}\textit{\emph{,}}\textit{ }\textit{\emph{if
its edges can be labeled such that a synchronized deterministic finite
automaton arises.}}
\end{enumerate}
Naturally, we identify a coloring of edges with a transition function
$\delta$ of the resulting automaton. It is not hard to observe that
any road colorable graph is admissible. In 1977 Adler, Goodwyn, and
Weiss \cite{ADL1} conjectured that the backward implication holds
as well. Their question became known as the Road Coloring Problem
and a positive answer was given in 2008 by Trahtman \cite{TRA6}.

For any alphabet $I$ and $w\in I^{\star}$, by $\mathbb{G}_{w}^{\left|I\right|}$
we denote the set of graphs with all out-degrees equal to $\left|I\right|$
such that there exists a coloring $\delta$ with $\left|\delta\!\left(Q,w\right)\right|=1$.
In this paper we work with the following computational problem:

\renewcommand{\arraystretch}{1.6}

\begin{flushleft}
\begin{tabular}{|>{\raggedright}p{25mm}>{\raggedright}p{94mm}|}
\hline 
\multicolumn{2}{|l|}{\noun{~~SRCW }(\emph{Synchronizing road coloring with prescribed
reset words})}\tabularnewline
\textbf{~~Input:} & Alphabet $I$, graph $G=\left(Q,E\right)$ with out-degrees $\left|I\right|$,
$W\subseteq I^{\star}$\tabularnewline
\textbf{~~Output:} & Is there a $w\in W$ such that $G\in\mathbb{G}_{w}^{\left|I\right|}$?\tabularnewline[1.5mm]
\hline 
\end{tabular}\\

\par\end{flushleft}

In this paper we study the restrictions to one-element sets $W$,
which means that we consider the complexity of the sets $\mathbb{G}_{w}^{\left|I\right|}$
themselves. 

Restrictions are denoted by subscripts and superscripts: $\mathrm{SRCW}_{k,X}^{\mathcal{M}}$
denotes $\mathrm{SRCW}$ restricted to inputs with $\left|I\right|=k$,
$W=X$, and $G\in\mathcal{M}$, where $\mathcal{M}$ is a class of
graphs. By $\mathcal{SC}$ we denote the class of strongly connected
graphs. Having a graph $G=\left(Q,E\right)$ fixed, by $\mathrm{d}_{G}\!\left(s,t\right)$
we denote the length of shortest directed path from $s\in Q$ to $t\in Q$
in $G$. For each $k\ge0$ we denote 
\[
V_{k}\!\left(q\right)=\left\{ s\in Q\mid\mathrm{d}_{G}\!\left(s,q\right)=k\right\} .
\]
Having $R\subseteq Q$, let $G\!\left[R\right]$ denote the induced
subgraph of $G$ on the vertex set $R$. If a graph $G$ has constant
out-degree $\left|I\right|$, a vertex $v\in Q$ is called a \emph{sink
state }if there are $\left|I\right|$ loops on $v$. By $\mathcal{Z}$
we denote the class of graphs having a sink state. The following lemma
can be easily proved by a reduction that adds a chain of $\left|u\right|$
new states to each state of a graph:
\begin{lemma}
\label{lem:chains}Let $\left|I\right|\ge1$ and $u,w\in\left\{ a,b\right\} ^{\star}$.
Then:
\begin{enumerate}
\item If $\mathrm{SRCW}_{k,\left\{ w\right\} }$ is NP-complete, so is $\mathrm{SRCW}_{k,\left\{ uw\right\} }$.
\item If $\mathrm{SRCW}_{k,\left\{ w\right\} }^{\mathcal{Z}}$ is NP-complete,
so is $\mathrm{SRCW}_{k,\left\{ uw\right\} }^{\mathcal{Z}}$.
\end{enumerate}
\end{lemma}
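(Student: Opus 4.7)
The plan is to handle both items of the lemma by a single polynomial-time construction that attaches to every state of the input graph a forced chain of $|u|$ fresh vertices. Write $m = |u|$ and denote the input alphabet by $I$. For each $q \in Q$ I would introduce new vertices $q^{(0)}, q^{(1)}, \ldots, q^{(m-1)}$ and place $|I|$ parallel edges from $q^{(i)}$ to $q^{(i+1)}$ for every $i < m - 1$, together with $|I|$ parallel edges from $q^{(m-1)}$ to $q$. The point of these bundles is that no matter how the chain edges are coloured, every letter read at $q^{(i)}$ advances one step along the chain, so reading any word of length $m$ from $q^{(0)}$ deterministically lands in $q$. The out-degree stays $|I|$, no new cycles are created (the attached chains are pendants), and any existing sink of $G$ remains a sink of $G'$, so the construction also respects the restriction to the class $\mathcal{Z}$.

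For the forward direction I would start from a coloring $\delta$ of $G$ with $\delta(Q, w) = \{s^\star\}$ and extend $\delta$ arbitrarily over the chain edges of $G'$. The resulting coloring $\delta'$ satisfies $\delta'(\cdot, uw) \equiv s^\star$ by a three-way case analysis on the starting vertex: from an original $q \in Q$ the run stays inside $G$ and reaches $\delta(q, uw) = s^\star$; from the head $q^{(0)}$ of a chain the prefix $u$ lands in $q$ and then $w$ resets; from a middle chain vertex $q^{(i)}$ with $i \ge 1$ the first $m - i$ letters of $u$ reach $q$, while the remaining $i$ letters of $u$ followed by $w$ act inside $G$ and fall into $s^\star$ because $w$ is a reset word there. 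This gives the implication $G \in \mathbb{G}_w^{|I|} \Rightarrow G' \in \mathbb{G}_{uw}^{|I|}$.

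The converse direction is where the parallel chains earn their keep. Given any coloring $\delta'$ of $G'$ under which $uw$ is a reset word with unique target $t^\star$, I would restrict $\delta'$ to the edges of $G$ and obtain a coloring $\delta$ of $G$. The forced chain bundles guarantee $\delta'(q^{(0)}, u) = q$ for every $q \in Q$, hence $\delta(q, w) = \delta'(q^{(0)}, uw) = t^\star$ independently of $q$, which means $w$ is a reset word of $(Q, I, \delta)$. The implication $G' \in \mathbb{G}_{uw}^{|I|} \Rightarrow G \in \mathbb{G}_w^{|I|}$ follows, and combined with the first direction this transfers NP-hardness in both the unrestricted and the $\mathcal{Z}$-restricted settings. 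I do not expect any deep obstacle: the whole argument rests on the single observation that a bundle of $|I|$ parallel edges collapses under every coloring to a deterministic single-step transition, which lets us force the effect of any length-$m$ prefix uniformly across all colourings.
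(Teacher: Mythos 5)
Your construction is exactly the one the paper has in mind: it states only that the lemma "can be easily proved by a reduction that adds a chain of $\left|u\right|$ new states to each state of a graph," and your bundles of $\left|I\right|$ parallel edges, the forced length-$\left|u\right|$ prefix, and the preservation of out-degrees and sink states fill in that sketch correctly in both directions. No gaps; this matches the paper's intended argument.
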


\section{\label{sec:A-Complete-Classification}A Complete Classification of
Binary Words According to Complexity of $\mathrm{SRCW}_{2,\left\{ w\right\} }$}

The theorem below presents one of the main results of the present
paper. Assuming that P does not equal NP, it introduces an exact dichotomy
concerning the words over binary alphabets. Let us fix the following
partition of $\left\{ a,b\right\} ^{\star}$:
\[
\begin{array}{cc}
\begin{aligned}T_{1}= & \left\{ a^{k},b^{k}\mid k\ge0\right\} ,\\
T_{2}= & \left\{ a^{k}b,b^{k}a\mid k\ge1\right\} ,
\end{aligned}
 & \hspace{20bp}\,\begin{aligned}T_{3} & =\left\{ a^{l}b^{k},b^{l}a^{k}\mid k\ge2,l\ge1\right\} ,\\
T_{4} & =\left\{ a,b\right\} ^{\star}\backslash\left(T_{1}\cup T_{2}\cup T_{3}\right).
\end{aligned}
\end{array}
\]
For the NP-completeness reductions throughout the present paper we
use a suitable variant of the satisfiability problem. The following
can be verified using the Schaefer's dichotomy theorem \cite{SCF1}:
\begin{lemma}
\label{lem: W-SAT is NPC}It holds that W-SAT is NP-complete.
\end{lemma}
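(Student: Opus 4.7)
The plan is to invoke Schaefer's dichotomy theorem, which characterizes the complexity of the generalized Boolean satisfiability problem $\mathrm{SAT}(S)$ for any fixed finite set $S$ of Boolean relations. Schaefer's theorem states that $\mathrm{SAT}(S)$ lies in P if every $R \in S$ satisfies at least one of six tractability conditions --- being $0$-valid, $1$-valid, Horn, dual-Horn, affine, or bijunctive --- and is NP-complete in every other case.

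First, I would observe that W-SAT is nothing but $\mathrm{SAT}(S)$ for the specific relation set $S$ that its clauses encode, so membership in NP is immediate: any satisfying assignment to the variables is a polynomial-size certificate verifiable in linear time. This reduces the whole lemma to an NP-hardness statement about $\mathrm{SAT}(S)$ that can be discharged purely by checking closure properties of $S$.

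Second, to establish NP-hardness, I would exhibit a single relation $R \in S$ (or, collectively, the relations in $S$) that fails all six Schaefer conditions. Concretely, I would tabulate the satisfying tuples of $R$ and verify that: the all-zero tuple is missing (not $0$-valid); the all-one tuple is missing (not $1$-valid); the table is not closed under coordinatewise $\wedge$ (not Horn); not closed under coordinatewise $\vee$ (not dual-Horn); not closed under the ternary $\oplus$ operation (not affine); and not closed under the ternary majority/median operation (not bijunctive). For a typical W-SAT-style clause, a small explicit table of satisfying assignments suffices to witness all six failures simultaneously, often with a single pair of tuples exhibiting non-closure for several of the operations at once.

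Third, Schaefer's dichotomy then immediately yields NP-hardness, completing the proof without any hand-crafted gadget reduction from $3$-SAT or another NP-complete variant. The main obstacle is essentially bookkeeping --- producing the explicit tuple table for $R$ and running the six closure checks --- which is entirely mechanical once the precise form of a W-SAT clause is written out; indeed the appeal of invoking Schaefer here is precisely that it replaces a bespoke reduction with a finite, routine verification.
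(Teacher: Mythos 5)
Your proposal is correct and follows essentially the same route as the paper, which likewise justifies the lemma by appealing to Schaefer's dichotomy theorem (the paper omits the mechanical verification that the single $4$-ary W-SAT relation fails to be $0$-valid, $1$-valid, Horn, dual-Horn, affine, and bijunctive). One minor caution: Schaefer's tractable cases require a single one of the six conditions to hold for \emph{all} relations in $S$ simultaneously, not merely that each relation satisfies some condition; since here $S$ is a singleton this does not affect your argument.
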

\begin{flushleft}
\begin{tabular}{|>{\raggedright}p{25mm}>{\raggedright}p{94mm}|}
\hline 
\multicolumn{2}{|l|}{\noun{~~W-SAT\label{W-SAT}}}\tabularnewline
\textbf{~~Input:} & Finite set $X$ of \emph{variables}, finite set $\Phi\subseteq X^{4}$
of \emph{clauses.}\tabularnewline
\textbf{~~Output:} & Is there an assignment $\xi:X\rightarrow\left\{ \mathbf{0},\mathbf{1}\right\} $
such that for each clause $\left(z_{1},z_{2},z_{3},z_{4}\right)\in\Phi$
it holds that:

(1) $\xi\!\left(z_{i}\right)=\mathbf{1}$ for some $i$,

(2) $\xi\!\left(z_{i}\right)=\mathbf{0}$ for some $i\in\left\{ 1,2\right\} $,

(3) $\xi\!\left(z_{i}\right)=\mathbf{0}$ for some $i\in\left\{ 3,4\right\} $?\tabularnewline[2mm]
\hline 
\end{tabular}
\par\end{flushleft}

In this section we use reductions from W-SAT to prove the NP-completeness
of $\mathrm{SRCW}_{2,\left\{ w\right\} }$ for each $w\in T_{3}$
and $w\in T_{4}$. In the case of $w\in T_{4}$ the reduction produces
only graphs having sink states. This shows that for $w\in T_{4}$
the problem $\mathrm{SRCW}_{2,\left\{ w\right\} }^{\mathcal{Z}}$
is NP-complete as well, which turns out to be very useful in Section
\ref{sec:A-Partial-Classification}, where we deal with strongly connected
graphs. For $w\in T_{3}$ we also prove NP-completeness, but we use
automata without sink states. We show that the cases with $w\in T_{1}\cup T_{2}$
are decidable in polynomial time.

\begin{flushleft}
In all the figures below we use bold solid arrows and bold dotted
arrows for the letters $a$ and $b$ respectively.
\par\end{flushleft}
\begin{theorem}
\label{thm:gen clas}Let $w\in\left\{ a,b\right\} ^{\star}$.
\begin{enumerate}
\item If $w\in T_{1}\cup T_{2}$, the problem $\mathrm{SRCW}_{2,\left\{ w\right\} }$
is solvable in polynomial time.
\item If $w\in T_{3}\cup T_{4}$, the problem $\mathrm{SRCW}_{2,\left\{ w\right\} }$
is NP-complete. Moreover, if $w\in T_{4}$, the problem $\mathrm{SRCW}_{2,\left\{ w\right\} }^{\mathcal{Z}}$
is NP-complete.
\end{enumerate}
\end{theorem}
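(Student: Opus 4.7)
The proof splits along the partition $T_1\cup T_2$ versus $T_3\cup T_4$. For item~(1) I would give explicit polynomial-time algorithms, and for item~(2) I would reduce from W-SAT (using Lemma~\ref{lem: W-SAT is NPC}), producing a construction parametrized by $w$ and by the W-SAT instance, with two variants depending on whether sink states are admitted.

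\textbf{Polynomial part.} For $w=a^k$ (case $T_1$, up to swapping the letters), any coloring corresponds to choosing a functional subgraph $f:Q\to Q$ with $f(v)$ an out-neighbor of $v$. The word $a^k$ is a reset word iff $f^k$ is constant, equivalently, iff there is a state $q^{\star}$ admitting an $a$-self-loop such that every vertex reaches $q^{\star}$ in at most $k$ applications of $f$. I would check this by trying each state with a self-loop in $G$ as candidate $q^{\star}$ and running a leveled backward search: set $L_0=\{q^{\star}\}$ and $L_{i+1}=L_i\cup\{v:v\text{ has an out-edge into }L_i\}$, accepting iff $L_k=Q$. For $w=a^kb$ (case $T_2$), one additionally guesses the common $b$-target $t$ of $\delta(Q,a^k)$ and combines the above with the constraint that each vertex ending up in the image $\delta(Q,a^k)$ reserves an out-edge to $t$ for the letter $b$. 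Both routines run in polynomial time.

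\textbf{NP-hardness.} Given a W-SAT instance $(X,\Phi)$ I would construct a graph $G_{X,\Phi,w}$ of out-degree $2$ together with a variable gadget $V_x$ for each $x\in X$ and a clause gadget $C_{\phi}$ for each $\phi\in\Phi$, glued so that (i) the only colorings under which $w$ could be a reset word color each $V_x$ in one of two canonical ways, which I read off as a truth assignment $\xi$; and (ii) under (i), the word $w$ is a reset word of the resulting automaton iff every clause is satisfied by $\xi$ in the W-SAT sense. The gadgets are shaped by $w$: for $w\in T_3$, writing $w=a^lb^k$ (or symmetric) with $l\ge 1$ and $k\ge 2$, the initial run of length $l$ routes inputs through the variable gadgets while the trailing run of length $\ge 2$ enforces the three W-SAT conditions inside the clause gadgets, carefully without introducing sink states. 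For $w\in T_4$ the word has at least three runs, which gives enough flexibility to let excess mass collapse into sink states; this will simultaneously yield the stronger claim for $\mathrm{SRCW}_{2,\{w\}}^{\mathcal{Z}}$.

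\textbf{Main obstacle.} The hard part is producing these NP-hardness reductions uniformly across the parametrized families $T_3$ and $T_4$. Lemma~\ref{lem:chains} lets me prepend arbitrary prefixes for free, so it suffices to exhibit a reduction for one representative of each ``shape of suffix'', but each shape (fixing the run-lengths appearing in $w$) still requires its own gadget construction and its own correctness analysis. The delicate direction is always the ``only if'': I would track the forward image $\delta(Q,w')$ for every prefix $w'$ of $w$ of increasing length, and argue that any coloring making $w$ a reset word must have colored every variable gadget in a canonical way, so that the encoded assignment satisfies $\Phi$. Membership in NP is immediate, since a coloring is a polynomial-size witness checkable by simulation.
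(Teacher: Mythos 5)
Your overall plan coincides with the paper's: explicit polynomial-time tests for $T_1\cup T_2$, and W-SAT reductions for $T_3\cup T_4$ normalized via Lemma~\ref{lem:chains} to a representative suffix for each shape (the paper uses $ab^k$ with $k\ge2$ for $T_3$ and $ab^ka^l$ with $k,l\ge1$ for $T_4$), with sink states appearing only in the $T_4$ gadgets. However, as written the proposal has two genuine gaps. First, for $w=a^kb$ your test is circular: the set $\delta\!\left(Q,a^k\right)$ that must ``reserve a $b$-edge to $t$'' depends on the coloring you are trying to construct, and you do not say how to decide in polynomial time whether a suitable coloring exists. The missing idea is that $\delta\!\left(Q,a^{k+1}\right)\subseteq\delta\!\left(Q,a^k\right)$, so the image after $k$ steps is closed under $a$; since each of its states gives up one edge to $q_0$ for the letter $b$, its $a$-edges must live in the graph $H_1$ obtained from the induced subgraph on the predecessors of $q_0$ by deleting one copy of each edge into $q_0$. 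This yields the paper's checkable criterion: every state must be within $a$-distance $k$ of the set $R$ of predecessors of $q_0$ from which a cycle of $H_1$ is reachable (together with $\mathrm{d}_{G}\!\left(s,q_{0}\right)\le k+1$ for all $s$). Without this, or an equivalent characterization, item (1) is not established for $T_2$.

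Second, the NP-hardness part describes only the intended behaviour of the variable and clause gadgets (``two canonical colorings read off as an assignment'', ``the trailing run enforces the three W-SAT conditions'') and explicitly defers ``its own gadget construction and its own correctness analysis'' for each shape. That construction is the substance of item (2): one must exhibit, for every $k\ge2$, a concrete out-degree-$2$ graph in which any coloring making $ab^k$ a reset word forces each variable gadget into one of exactly two colorings, and similarly for $ab^ka^l$ with a sink state. The delicate forward direction (tracking $\delta\!\left(Q,a\right)$ and $\delta\!\left(Q,ab\right)$ through the gadgets, using for $T_3$ that the unique synchronization target has no loop) is entirely absent. So the proposal is a correct high-level plan that matches the paper's route, but it does not yet constitute a proof.
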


\paragraph{Proof for $w\in T_{1}$.}

It is easy to see that $G\in\mathbb{G}_{a^{k}}$ if and only if there
is $q_{0}\in Q$ such that there is a loop on $q_{0}$ and for each
$s\in Q$ we have $\mathrm{d}_{G}\!\left(s,q_{0}\right)\le k$.\qed

\paragraph{Proof for $w\in T_{2}$.}

For a fixed $q_{0}\in Q$, we denote $Q_{1}=\left\{ s\in Q\mid s\longrightarrow q_{0}\right\} $
and 
\[
R=\left\{ s\in Q_{1}\mid H_{1}\mbox{ has a cycle reachable from }s\right\} ,
\]
where $H_{1}$ is obtained from $G\!\left[Q_{1}\right]$ by decreasing
multiplicity by $1$ for each edge ending in $q_{0}$. If $q_{0}\notin Q_{1}$,
we have $H_{1}=G\!\left[Q_{1}\right]$. Let us prove that $G\in\mathbb{G}_{a^{k}b}$
if and only if there is $q_{0}\in Q$ such that:
\begin{enumerate}
\item It holds that $\mathrm{d}_{G}\!\left(s,q_{0}\right)\le k+1$ for each
$s\in Q$.
\item For each $s\in Q$ there is a $q\in R$ such that $\mathrm{d}_{G}\!\left(s,q\right)\le k$.
\end{enumerate}
First, check the backward implication. For each $r\in R$, we color
by $b$ an edge of the form $r\longrightarrow q_{0}$ that does not
appear in $H_{1}$. Then we fix a forest of shortest paths from all
the vertices of $Q\backslash R$ into $R$. Due to the second condition
above, the branches have length at most $k$. We color by $a$ the
edges used in the forest. We have completely specified a coloring
of edges. Now, for any $s\in Q$ a prefix $a^{j}$ of $a^{k}b$ takes
us into $R$, the factor $a^{k-j}$ keeps us inside $R$, and with
the letter $b$ we end up in $q_{0}$. 

As for the forward implication, the first condition is trivial. For
the second one, take any $s\in Q$ and denote $s_{j}=\delta\!\left(s,a^{j}\right)$
for $j\ge0$. Clearly, $s_{k}\in Q_{1}$, but we show also that $s_{k}\in R$,
so we can set $q=s_{k}$ in the last condition. Indeed, whenever $s_{j}\in Q_{1}$
for $j\ge k$, we remark that $\delta\!\left(s_{j-k+1},a^{k}\right)=q_{0}$
and thus $s_{j+1}\in Q_{1}$ as well. Since $j$ can grow infinitely,
there is a cycle within $Q_{1}$ reachable from $s_{k}$.\qed

\paragraph{Proof for $w\in T_{3}$.}

Due to Lemma \ref{lem:chains}, it is enough to deal with $w=ab^{k}$
for each $k\ge2$. For a polynomial-time reduction from W-SAT, take
an instance $X=\left\{ x_{1},\dots,x_{n}\right\} $, $\Phi=\left\{ C_{1},\dots,C_{m}\right\} $,
where $C_{j}=\left(z_{j,1},z_{j,2},z_{j,3},z_{j,4}\right)$ for each
$j=1,\dots,m$. We build the graph $G_{k,\phi}=\left(Q,E\right)$
defined by Fig. \ref{fig: AphiT3}. Note that:
\begin{itemize}
\item In Fig. \ref{fig: AphiT3}, states are represented by discs. For each
$j=1,\dots,m$, the edges outgoing from $\mathrm{C}_{i}'$ and $\mathrm{C}_{i}^{''}$
represent the formula $\Phi$ by leading to the states $z_{j,1},z_{j,2},z_{j,3},z_{j,4}\in\left\{ x_{1},\dots,x_{n}\right\} \subseteq Q$. 
\item In the case of $k=2$ the state $\mathrm{V}_{i,2}$ does not exist,
so we set $x_{i}\longrightarrow\mathrm{D}_{0}$ and $\mathrm{V}_{i,1}\longrightarrow\mathrm{D}_{0}$
instead of $x_{i}\longrightarrow\mathrm{V}_{i,2}$ and $\mathrm{V}_{i,1}\longrightarrow\mathrm{V}_{i,2}$. 
\end{itemize}
We show that $G_{k,\Phi}\in\mathbb{G}_{ab^{k}}$ if and only if there
is an assignment $\xi:X\rightarrow\left\{ \mathbf{0},\mathbf{1}\right\} $
satisfying the conditions given by $\Phi$.
\begin{figure}
\begin{centering}
\includegraphics{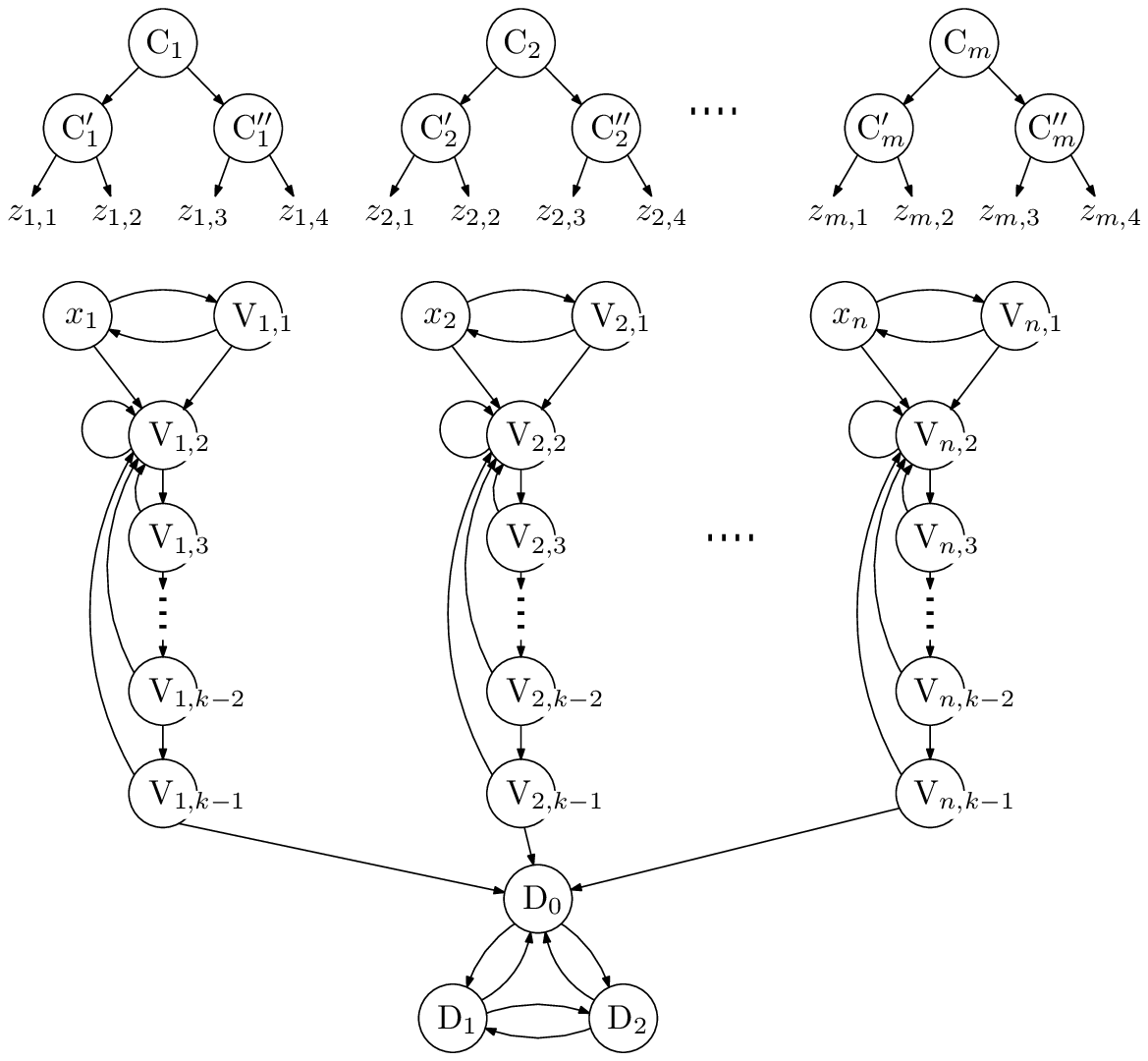}
\par\end{centering}

\caption{\label{fig: AphiT3}The graph $G_{k,\Phi}$ reducing W-SAT to $\mathrm{SRCW}_{\left|I\right|=2,W=\left\{ ab^{k}\right\} }$
for $k\ge2$}
\vspace{10mm}
\begin{minipage}[t]{0.29\columnwidth}%
\begin{center}
\includegraphics{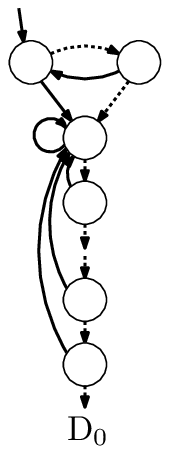}\caption{\label{fig:T3A}A coloring corresponding to $\xi\!\left(x_{i}\right)=\mathbf{0}$}

\par\end{center}%
\end{minipage}\hfill{}%
\begin{minipage}[t]{0.29\columnwidth}%
\begin{center}
\includegraphics{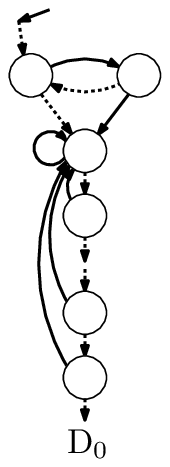}\caption{\label{fig:T3AB}A coloring corresponding to $\xi\!\left(x_{i}\right)=\mathbf{1}$}

\par\end{center}%
\end{minipage}\hfill{}%
\begin{minipage}[t]{0.33\columnwidth}%
\begin{center}
\includegraphics{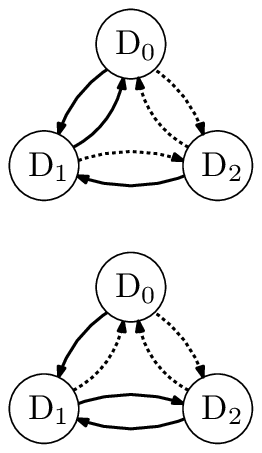}\caption{\label{fig:T3D}Colorings for $k$ even (top) and odd (bottom)}

\par\end{center}%
\end{minipage}
\end{figure}

First, let there be a coloring $\delta$ of $G_{k,\Phi}$ such that
$\left|\delta\!\left(Q,ab^{k}\right)\right|=1$. Observe that necessarily
$\delta\!\left(Q,ab^{k}\right)=\left\{ \mathrm{D}_{0}\right\} $,
while there is no loop on $\mathrm{D}_{0}$. We use this fact to observe
that whenever $x_{i}\in\delta\!\left(Q,a\right)$, the edges outgoing
from $x_{i},\mathrm{V}_{i,1},\dots,\mathrm{V}_{i,k-1}$ must be colored
according to Fig. \ref{fig:T3A}, but if $x_{i}\in\delta\!\left(Q,ab\right)$,
then they must be colored according to Fig. \ref{fig:T3AB}. Let $\xi\!\left(x_{i}\right)=\mathbf{1}$
if $x_{i}\in\delta\!\left(Q,ab\right)$ and $\xi\!\left(x_{i}\right)=\mathbf{0}$
otherwise. Choose any $j\in\left\{ 1,\dots,m\right\} $ and observe
that 
\[
\xi\!\left(\delta\!\left(\mathrm{C}_{j},ab\right)\right)=\mathbf{1},\hspace{5mm}\xi\!\left(\delta\!\left(\mathrm{C}'_{j},a\right)\right)=\mathbf{0},\hspace{5mm}\xi\!\left(\delta\!\left(\mathrm{C}''_{j},a\right)\right)=\mathbf{0},
\]
thus we can conclude that all the conditions from the definition of
W-SAT hold for the clause $C_{j}$.

On the other hand, let $\xi$ be a satisfying assignment of $\Phi$.
For each $j$ we color the edges outgoing from $\mathrm{C}_{j},\mathrm{C}'_{j},\mathrm{C}''_{j}$
such that the $ab$-path from $\mathrm{C}_{j}$ leads to the $z_{j,i}$
with $\xi\!\left(z_{j,i}\right)=\mathbf{1}$ and the $a$-paths from
$\mathrm{C}'_{j},\mathrm{C}''_{j}$ lead to the $z_{j,i'}$ and $z_{j,i''}$
with $\xi\!\left(z_{j,i'}\right)=\mathbf{0},\xi\!\left(z_{j,i''}\right)=\mathbf{0}$,
where $i'\in\left\{ 1,2\right\} ,i''\in\left\{ 3,4\right\} $. For
the edges outgoing from $x_{i},\mathrm{V}_{i,1},\dots,\mathrm{V}_{i,k-1}$
we use Fig. \ref{fig:T3A} if $\xi\!\left(x_{i}\right)=\mathbf{0}$
and Fig. \ref{fig:T3AB} if $\xi\!\left(x_{i}\right)=\mathbf{1}$.
The transitions within $\mathrm{D}_{0},\mathrm{D}_{1},\mathrm{D}_{2}$
are colored according to Fig. \ref{fig:T3D}, depending on the parity
of $k$. Observe that for each $i\in\left\{ 1,\dots,n\right\} $ we
have $x_{i}\notin\delta\!\left(Q,ab\right)$ if $\xi\!\left(x_{i}\right)=\mathbf{0}$
and $x_{i}\notin\delta\!\left(Q,a\right)$ if $\xi\!\left(x_{i}\right)=\mathbf{1}$.
Using this fact we check that $\delta\!\left(Q,w\right)=\left\{ \mathrm{D}_{0}\right\} $.
\qed

\paragraph{Proof for $w\in T_{4}$.}

Any $w\in T_{4}$ can be written as $w=va^{j}b^{k}a^{l}$ or $w=vb^{j}a^{k}b^{l}$
for $j,k,l\ge1$. Due to Lemma \ref{lem:chains} it is enough to deal
with $w=ab^{k}a^{l}$ for each $k,l\ge1$. Take an instance of W-SAT
as above and construct the graph $G_{w,\Phi}=\left(Q,E\right)$ defined
by Fig. \ref{fig:AphiT4}. Note that:
\begin{itemize}
\item In the case of $l=1$, the state $\mathrm{Z}_{i,1}$ does not exist,
so we set $\mathrm{W}_{i}'\longrightarrow\mathrm{D}_{0}$ and $\mathrm{V}_{i,k-1}\longrightarrow\mathrm{D}_{0}$
instead of $\mathrm{W}_{i}'\longrightarrow\mathrm{Z}_{i,1}$ and $\mathrm{V}_{i,k-1}\longrightarrow\mathrm{Z}_{i,1}$.
\item In the case of $k=1$, the state $V_{i,1}$ does not exist, so we
set $x_{i}\longrightarrow\mathrm{Z}_{i,1}$ (or $x_{i}\longrightarrow\mathrm{D}_{0}$
if $l=1$) and $x_{i}\longrightarrow\mathrm{W}_{i}$ instead of $x_{i}\rightrightarrows\mathrm{V}_{i,1}$.
\begin{figure}
\begin{centering}
\includegraphics{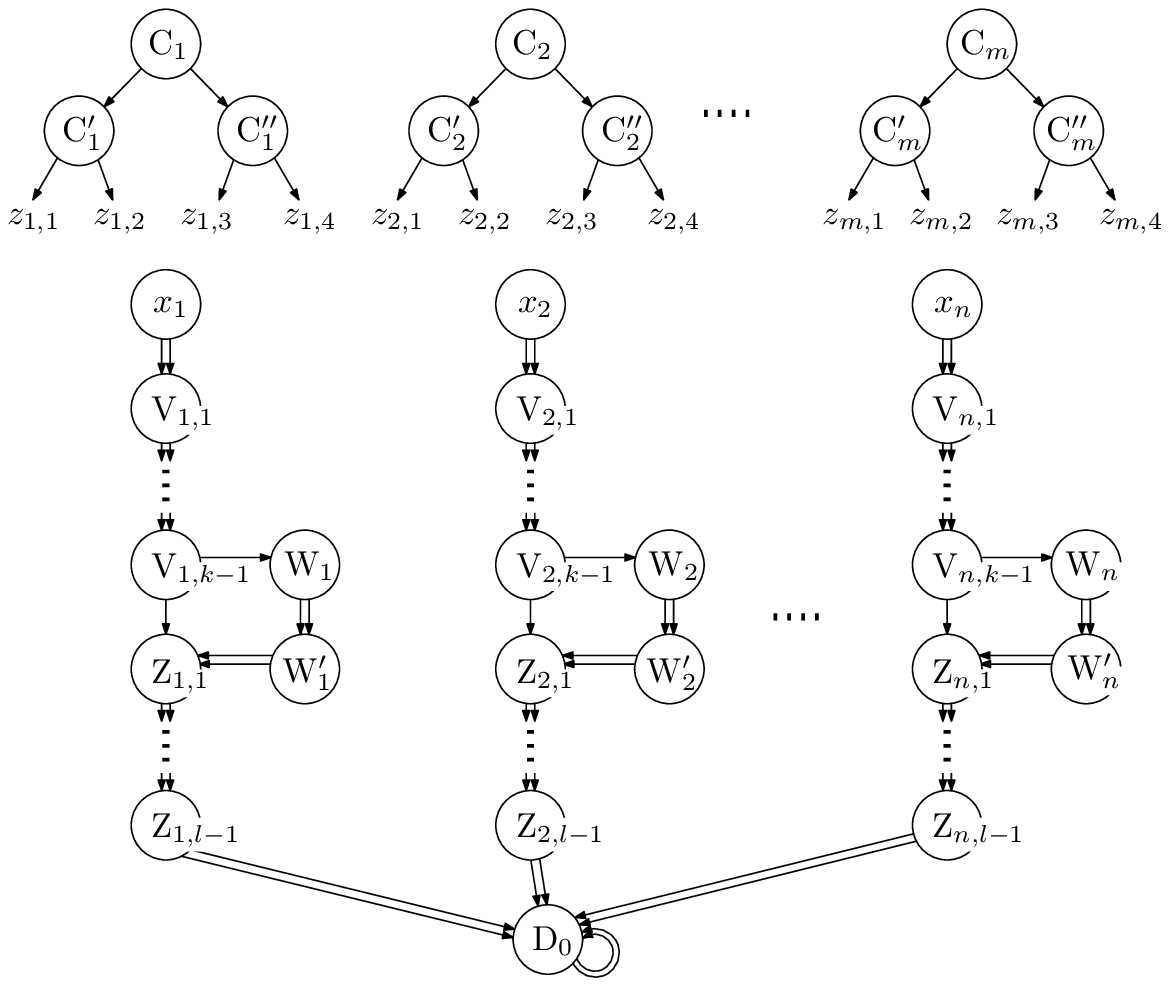}
\par\end{centering}

\caption{\label{fig:AphiT4}The graph $G_{w,\Phi}$ reducing W-SAT to $\mathrm{SRCW}_{\left|I\right|=2,W=\left\{ ab^{k}a^{l}\right\} }^{\mathcal{Z}}$
for $k,l\ge1$}
\end{figure}

\end{itemize}
Let there be a coloring $\delta$ of $G_{w,\Phi}$ such that $\left|\delta\!\left(Q,w\right)\right|=1$.
Observe that $\delta\!\left(Q,w\right)=\left\{ \mathrm{D}_{0}\right\} $.
Next, observe that whenever $x_{i}\in\delta\!\left(Q,a\right)$, then
$\mathrm{V}_{i,k-1}\overset{b}{\longrightarrow}\mathrm{Z}_{i,1}$,
but if $x_{i}\in\delta\!\left(Q,ab\right)$, then $\mathrm{V}_{i,k-1}\overset{a}{\longrightarrow}\mathrm{Z}_{i,1}$.
Let $\xi\!\left(x_{i}\right)=\mathbf{1}$ if $x_{i}\in\delta\!\left(Q,ab\right)$
and $\xi\!\left(x_{i}\right)=\mathbf{0}$ otherwise. We choose any
$j\in\left\{ 1,\dots,m\right\} $ and conclude exactly as we did in
the case of $T_{3}$.

On the other hand, let $\xi$ be a satisfying assignment of $\Phi$.
For each $j$, we color the edges outgoing from $\mathrm{C}_{j},\mathrm{C}'_{j},\mathrm{C}''_{j}$
as we did in the case of $T_{3}$. For each $i$, we put $\mathrm{V}_{i,k-1}\overset{a}{\longrightarrow}\mathrm{Z}_{i,1},\mathrm{V}_{i,k-1}\overset{b}{\longrightarrow}\mathrm{W}_{i}$
if $\xi\!\left(x_{i}\right)=\mathbf{1}$ and the reversed variant
if $\xi\!\left(x_{i}\right)=\mathbf{0}$.\qed

\section{\label{sec:A-Partial-Classification}A Partial Classification of
Binary Words According to Complexity of $\mathrm{SRCW}_{2,\left\{ w\right\} }^{\mathcal{SC}}$}

Clearly, for any $w\in T_{1}\cup T_{2}$ we have $\mathrm{SRCW}_{2,\left\{ w\right\} }^{\mathcal{SC}}\in\mathrm{P}$.
In Section \ref{sub:abb sc in p} we show that 
\[
\mathrm{SRCW}_{2,\left\{ abb\right\} }^{\mathcal{SC}}\in\mathrm{P},
\]
which is a surprising result because the general $\mathrm{SRCW}_{2,\left\{ w\right\} }$
is NP-complete for any $w\in T_{3}$, including $w=abb$. We are not
aware of any other words that witness this difference between $\mathrm{SRCW}^{\mathcal{SC}}$
and $\mathrm{SRCW}$. 

In Section \ref{sub:sink devs} we introduce a general method using
\emph{sink devices }that allows us to prove the NP-completeness of
$\mathrm{SRCW}_{2,\left\{ w\right\} }^{\mathcal{SC}}$ for infinitely
many words $w\in T_{4}$, including any $w\in T_{4}$ with the first
and last letter being the same. However, we are not able to apply
the method to each $w\in T_{4}$.

\subsection{\label{sub:abb sc in p}A Polynomial-Time Case}

A graph $G=\left(Q,E\right)$ is said to be \emph{$k$-lifting} if
there exists $q_{0}\in Q$ such that for each $s\in Q$ there is an
edge leading from $s$ into $V_{k}\!\left(q_{0}\right)$. Instead
of $2$-lifting we just say \emph{lifting}.
\begin{lemma}
\label{lem:if raked then abb }If $G$ is a $k$-lifting graph, then
$G\in\mathbb{G}_{ab^{k}}$.
\end{lemma}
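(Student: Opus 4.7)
The plan is to exhibit an explicit coloring of $G$ that makes $ab^k$ a reset word. Fix $q_0$ as guaranteed by the $k$-lifting property. The intended coloring picks, at each vertex $s\in Q$, an outgoing edge terminating in $V_k(q_0)$ (one exists by $k$-lifting) and colors it $a$; the remaining outgoing edge is colored $b$. Then the $a$-step collapses all of $Q$ into $V_k(q_0)$, and the goal is to show that each subsequent $b$ walks $V_{k-i}(q_0)$ into $V_{k-i-1}(q_0)$, so that $b^k$ after $a$ lands the entire state set at $q_0\in V_0(q_0)$.

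The crux is to verify that, for any $s\in V_j(q_0)$ with $1\le j\le k$, the $b$-edge at $s$ really enters $V_{j-1}(q_0)$. I would argue this as follows: by $k$-lifting, $s$ has an outgoing edge into $V_k(q_0)$; on the other hand, taking the first step of a shortest path from $s$ to $q_0$ shows that $s$ also has an outgoing edge into $V_{j-1}(q_0)$. Since $j-1<k$, the sets $V_{j-1}(q_0)$ and $V_k(q_0)$ are disjoint, so these two edges are distinct, and since $s$ has out-degree exactly $2$ they are the only outgoing edges at $s$. Hence once the $V_k$-bound edge is colored $a$, the remaining edge, which necessarily enters $V_{j-1}(q_0)$, is automatically colored $b$.

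With this in hand, a straightforward induction on $i$ gives $\delta\!\left(V_k(q_0),b^i\right)\subseteq V_{k-i}(q_0)$ for every $0\le i\le k$, and in particular $\delta\!\left(V_k(q_0),b^k\right)=\{q_0\}$. Combined with $\delta\!\left(Q,a\right)\subseteq V_k(q_0)$, this yields $\delta\!\left(Q,ab^k\right)=\{q_0\}$, as required. I expect the combinatorial observation above --- that any vertex of $V_j(q_0)$ with $1\le j\le k$ has its two outgoing edges forced to lie in $V_k(q_0)$ and $V_{j-1}(q_0)$ --- to be the only point needing care; the rest is bookkeeping about shortest paths and the inductive descent along the levels $V_k,V_{k-1},\dots,V_0$.
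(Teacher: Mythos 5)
Your proof is correct. The paper omits the proof of this lemma (it is one of the proofs dropped for the page limit), but your argument is evidently the intended one: color the guaranteed edge into $V_k\!\left(q_0\right)$ by $a$ at every vertex, and observe that for $s\in V_j\!\left(q_0\right)$ with $1\le j\le k$ the two outgoing edges are forced into $V_k\!\left(q_0\right)$ and $V_{j-1}\!\left(q_0\right)$ respectively (they are distinct since these level sets are disjoint, and the out-degree is exactly $2$), so the $b$-edge descends one level and $b^k$ drives $V_k\!\left(q_0\right)$ down to $\left\{ q_{0}\right\} $. The key observation you flag --- that the shortest-path edge into $V_{j-1}\!\left(q_0\right)$ cannot coincide with the $k$-lifting edge --- is exactly the point that makes the coloring well defined and the descent work, and your handling of it is sound.
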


\begin{lemma}
\label{lem:no inc b}If $G$ is strongly connected, $G$ is not lifting,
and $G\in\mathbb{G}_{abb}$ via $\delta$ and $q_{0}$, then $\delta$
has no $b$-transition ending in $V_{2}\!\left(q_{0}\right)\cup V_{3}\!\left(q_{0}\right)$.
Moreover, $V_{3}\!\left(q_{0}\right)=\emptyset$.\end{lemma}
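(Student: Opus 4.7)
The plan is to first reduce the ``moreover'' clause to the main claim, and then argue the main claim by contradiction.

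\textbf{Setup and consequences of $G \in \mathbb{G}_{abb}$.} Write $\alpha(s)=\delta(s,a)$, $\beta(s)=\delta(s,b)$ and put
\[
D_1=\{t \in Q:\beta(t)=q_0\}, \qquad D_2=\{t\in Q:\beta(\beta(t))=q_0\}.
\]
From $\delta(Q,abb)=\{q_0\}$ we obtain $\alpha(Q)\subseteq D_2$ and $\beta(D_2)\subseteq D_1$; each element of $D_1$ has a $b$-edge to $q_0$, so $D_1\subseteq V_0(q_0)\cup V_1(q_0)$, and analogously $D_2\subseteq V_0(q_0)\cup V_1(q_0)\cup V_2(q_0)$. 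Hence no $a$-transition of $\delta$ ever ends in $V_k(q_0)$ for any $k\ge 3$, and $V_k(q_0)=\emptyset$ for every $k\ge 4$. This also reduces the ``moreover'' clause to the main conclusion: once we know that no $b$-transition ends in $V_3(q_0)$, the set $V_3(q_0)$ has no incoming edges at all, and strong connectivity forces $V_3(q_0)=\emptyset$.

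\textbf{Main step, by contrapositive.} Assume some edge $u\overset{b}{\to}v$ of $\delta$ satisfies $v\in V_2(q_0)\cup V_3(q_0)$; I shall exhibit $p_0\in Q$ such that every state of $Q$ has an out-edge into $V_2(p_0)$, contradicting non-lifting. The candidate anchor is $p_0=v$ if $v\in V_2(q_0)$, and otherwise the second vertex on a fixed shortest $v\rightsquigarrow q_0$ path, so that in both cases $p_0\in V_2(q_0)$ and $\mathrm{d}_G(p_0,q_0)=2$. For every $s\in Q$, the required out-edge is produced by a case analysis on the position of $s$ in the partition $Q=V_0(q_0)\cup V_1(q_0)\cup V_2(q_0)\cup V_3(q_0)$, using the constraints $\alpha(s)\in D_2\subseteq V_{\le 2}(q_0)$ and $\beta(\alpha(s))\in D_1\subseteq V_{\le 1}(q_0)$, together with the elementary bound $|\mathrm{d}_G(s',q_0)-\mathrm{d}_G(s,q_0)|\le 1$ for any neighbor $s'$ of $s$. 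The translation between ``distance from $q_0$'' and ``distance from $p_0$'' is done through the fixed $p_0 \rightsquigarrow q_0$ path of length $2$.

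\textbf{Main obstacle.} The delicate point is handling states $s\in V_0(q_0)\cup V_1(q_0)$ whose $\alpha$-edge also lies in $V_{\le 1}(q_0)$: for such $s$ the lifting condition with respect to $q_0$ itself may well fail, so the shift of anchor from $q_0$ to $p_0$ is essential, and one must verify that $V_2(p_0)$ is large enough to meet the out-neighborhood of every such $s$. Ensuring this uniformly, while using both the assumed ``bad'' $b$-edge into $V_2(q_0)\cup V_3(q_0)$ and the strong connectivity of $G$, is where the main technical effort will lie.
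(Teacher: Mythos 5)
Your setup and the reduction of the ``moreover'' clause are fine, and your observation that $\alpha(Q)\subseteq V_{0}\!\left(q_{0}\right)\cup V_{1}\!\left(q_{0}\right)\cup V_{2}\!\left(q_{0}\right)$ matches what is needed. But the central step is left as an acknowledged obstacle rather than proved, and the route you propose to close it does not work. Your plan is to shift the lifting anchor from $q_{0}$ to some $p_{0}\in V_{2}\!\left(q_{0}\right)$, precisely because you cannot rule out $a$-transitions landing in $V_{0}\!\left(q_{0}\right)\cup V_{1}\!\left(q_{0}\right)$. Two problems: first, in a directed graph the bound $\left|\mathrm{d}_{G}\!\left(s',q_{0}\right)-\mathrm{d}_{G}\!\left(s,q_{0}\right)\right|\le1$ for an out-neighbour $s'$ of $s$ is false in one direction (only $\mathrm{d}_{G}\!\left(s,q_{0}\right)\le\mathrm{d}_{G}\!\left(s',q_{0}\right)+1$ holds), so distances to $p_{0}$ are essentially uncontrolled by distances to $q_{0}$ and there is no reason $V_{2}\!\left(p_{0}\right)$ should meet every out-neighbourhood. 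Second, the anchor shift is unnecessary: the correct statement is that under your contradiction hypothesis $G$ is lifting with anchor $q_{0}$ itself, and the worry you flag (that some $\alpha(s)$ lies in $V_{0}\cup V_{1}$) can be refuted outright.

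The missing chain of deductions is the following. Given the bad transitions $r\overset{b}{\longrightarrow}s\overset{b}{\longrightarrow}t$ with $s\in V_{2}\!\left(q_{0}\right)\cup V_{3}\!\left(q_{0}\right)$ (so $s,t\neq q_{0}$), take a shortest path $P$ from $q_{0}$ to $r$, which exists by strong connectivity. Every edge of $P$ must be a $b$-transition: otherwise the last $a$-transition $r'\overset{a}{\longrightarrow}r''$ on $P$ would give $\delta\!\left(r',abb\right)=\delta\!\left(r'',bb\right)$, a vertex lying on $P$ after $r''$ or in $\left\{s,t\right\}$, hence different from $q_{0}$ -- contradiction. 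Consequently $\delta\!\left(q_{0},b\right)\neq q_{0}$ and $\delta\!\left(q_{0},bb\right)\neq q_{0}$, so no $a$-transition may end in $q_{0}$. Then every vertex of $V_{1}\!\left(q_{0}\right)$ must send its edge to $q_{0}$ coloured $b$, whence no $a$-transition may end in $V_{1}\!\left(q_{0}\right)$ either (its $bb$-continuation would reach $q_{0}$ one step too early and then leave it). Combined with your own observation that no $a$-transition ends in $V_{k}\!\left(q_{0}\right)$ for $k\ge3$, all $a$-transitions end in $V_{2}\!\left(q_{0}\right)$, so every state has an out-edge into $V_{2}\!\left(q_{0}\right)$ and $G$ is lifting -- the desired contradiction. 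Without this argument (or an equivalent one) your proof does not go through.
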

\begin{proof}
First, suppose for a contradiction that some $s\in V_{2}\!\left(q_{0}\right)\cup V_{3}\!\left(q_{0}\right)$
has an incoming $b$-transition. Together with its outgoing $b$-transition
we have
\[
r\overset{b}{\longrightarrow}s\overset{b}{\longrightarrow}t,
\]
where $s\neq q_{0}$ and $t\neq q_{0}$. Due to the strong connectivity
there is a shortest path $P$ from $q_{0}$ to $r$ (possibly of length
$0$ if $r=q_{0}$). The path $P$ is made of $b$-transitions. Indeed,
if there were some $a$-transitions, let $r'\overset{a}{\longrightarrow}r''$
be the last one. The $abb$-path outgoing from $r'$ ends in $\delta\!\left(r'',bb\right)$,
which either lies on $P$ or in $\left\{ s,t\right\} $, so it is
different from $q_{0}$ and we get a contradiction. 

It follows that $\delta\!\left(q_{0},b\right)\neq q_{0}$ and $\delta\!\left(q_{0},bb\right)\neq q_{0}$,
so there cannot be any $a$-transition incoming to $q_{0}$. Hence
for any $s\in V_{1}\!\left(q_{0}\right)$ there is a transition $s\overset{b}{\longrightarrow}q_{0}$
and thus there is no $a$-transition ending in $V_{1}\!\left(q_{0}\right)$.
Because there is also no $a$-transition ending in $V_{3}\!\left(q_{0}\right)$,
all the $a$-transitions end in $V_{2}\!\left(q_{0}\right)$ and thus
$G$ is lifting, which is a contradiction.

Second, we show that $V_{3}\!\left(q_{0}\right)$ is empty. Suppose
that $s\in V_{3}\!\left(q_{0}\right)$. No $a$-transition comes to
$s$ since there is no path of length $2$ from $s$ to $q_{0}$.
Thus, $s$ has no incoming transition, which contradicts the strong
connectivity.\qed

\end{proof}
\begin{theorem}
\label{thm: abb SC in P}$\mathrm{SRCW}_{2,\left\{ abb\right\} }^{\mathcal{SC}}$
is decidable in polynomial time.\end{theorem}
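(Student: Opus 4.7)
I would decide the problem by iterating over every $q_0 \in Q$ as the candidate image of the reset word $abb$ under a valid coloring and testing feasibility for each in polynomial time. For a fixed $q_0$, I first check the $2$-lifting condition at $q_0$: whether every vertex of $Q$ has an outgoing edge into $V_2(q_0)$, computable by a reverse BFS from $q_0$. If it succeeds, Lemma~\ref{lem:if raked then abb } gives $G \in \mathbb{G}_{abb}$ and I answer YES. Since there are at most $|Q|$ candidates, this pass is polynomial.

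If no $q_0$ is $2$-lifting, Lemma~\ref{lem:no inc b} forces any potential witness $(\delta, q_0)$ to satisfy $V_3(q_0) = \emptyset$ and to have every $b$-transition end in $U := \{q_0\} \cup V_1(q_0)$. For each $q_0$ with $V_3(q_0) = \emptyset$ I then search for a coloring satisfying (i) every $b$-edge lands in $U$, and (ii) $\delta(u, abb) = q_0$ for every $u \in Q$. Condition (i) either forces a vertex's $b$-choice (when exactly one outgoing edge lies in $U$), leaves a binary choice (when both do), or rejects $q_0$ (when neither does, which can only happen at $q_0$ itself). Condition (ii) rewrites as $\delta(u, ab) \in B$ for every $u \in Q$, where $B := \{v : \delta(v, b) = q_0\} \subseteq U$ depends only on the $b$-coloring restricted to $U$.

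To decide feasibility of (i) and (ii) I would use a constraint-propagation strategy: introduce Boolean variables for the remaining binary $b$-choices (one per flexible vertex) and propagate forced values starting from a small set of seed decisions, most naturally the value of $\delta(q_0, b)$, which has at most two candidates. The main obstacle is showing that the propagation terminates in polynomial time with either a unique coloring or an equivalent 2-SAT instance. I expect the rigidity implied by Lemma~\ref{lem:no inc b} --- namely that $\delta(Q, b) \subseteq U$ and $V_3(q_0) = \emptyset$ --- to be strong enough that each seed determines the coloring on $U$ essentially uniquely via backward $b$-walks anchored at $q_0$, after which condition (ii) becomes a local check on the edges leaving $V_2(q_0)$. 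Enumerating the $O(1)$ seeds and running polynomial-time propagation and verification for each then gives an overall polynomial-time algorithm.
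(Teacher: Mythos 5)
Your overall outline matches the paper's: iterate over candidate targets $q_0$, accept immediately in the lifting case via Lemma~\ref{lem:if raked then abb }, use Lemma~\ref{lem:no inc b} to force $V_3(q_0)=\emptyset$ and to force every $b$-transition into $\{q_0\}\cup V_1(q_0)$, and then reduce the remaining choices to a tractable constraint problem. But the proposal stops exactly where the real work begins, and the mechanism you sketch for that part is not right. You claim the residual search has $O(1)$ seeds (essentially the choice of $\delta(q_0,b)$) and that propagation from a seed determines the coloring on $U$ ``essentially uniquely.'' It does not. The free choices live in two places: (a) each connected component $B_i$ of the induced subgraph $G[V_1(q_0)]$ admits two alternating colorings of its internal edges, independently of the other components, and (b) each vertex $s\in V_2(q_0)\cup\{q_0\}$ with both outgoing edges into $V_1(q_0)$ has a free binary choice. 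There can be linearly many independent such choices, so no constant number of seeds followed by deterministic propagation covers them; what actually saves the day is that the constraint $\delta(Q,abb)=\{q_0\}$ couples these choices only through implications of the form ``if $s$ sends its $a$-edge into component $B$, then $B$ must use the variant routing $bb$ to $q_0$,'' i.e.\ a 2-SAT instance over $|A|+\beta$ variables.

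Moreover, the step that makes (a) true --- the key combinatorial lemma of the paper's proof --- is entirely absent from your proposal. One must show that for consecutive edges $e=(r,s)$, $e'=(s,t)$ with $s,t\in V_1(q_0)$, any valid labeling satisfies: $e$ is labeled $a$ if and only if $e'$ is labeled $b$. The forward direction uses that there is no loop on $q_0$ (which is why the paper disposes of the loop case separately beforehand, accepting since $V_3(q_0)=\emptyset$ then gives $G\in\mathbb{G}_{bb}$ --- another case you skip, and one your later argument would implicitly rely on). The backward direction needs a genuine argument: take a path into $r$ starting outside $V_1(q_0)$, locate the last vertex $\overline{r}$ on it lying in $\delta(Q,a)$ (it exists because $V_2(q_0)\cup\{q_0\}\subseteq\delta(Q,a)$ and $V_3(q_0)=\emptyset$), and derive $\delta(\overline{r},bb)\neq q_0$, a contradiction. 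Without this alternation property there is no bound of two colorings per component and no reason the constraints reduce to binary implications, so polynomial decidability does not follow from what you have written. In short: right skeleton, but the theorem's actual content --- the structural rigidity inside $V_1(q_0)$ and the resulting 2-SAT encoding --- is asserted as a hope rather than proved.
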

\begin{proof}
As the input we have a strongly connected $G=\left(Q,E\right)$. Suppose
that $q_{0}$ is fixed (we can just try each $q_{0}\in Q$) and so
we should decide if there is some $\delta$ with $\delta\!\left(Q,abb\right)=\left\{ q_{0}\right\} $.
First we do some preprocessing:
\begin{itemize}
\item If $G$ is lifting, according to Lemma \ref{lem:if raked then abb }
we accept.
\item If $V_{3}\!\left(q_{0}\right)\neq\emptyset$, according to Lemma \ref{lem:no inc b}
we reject.
\item If there is a loop on $q_{0}$, we accept, since due to $V_{3}\!\left(q_{0}\right)=\emptyset$
we have $G\in\mathbb{G}_{bb}$.
\end{itemize}

If we are still not done, we try to find some labeling $\delta$,
assuming that none of the three conditions above holds. We deduce
two necessary properties of $\delta$. First, Lemma \ref{lem:no inc b}
says that we can safely label all the transitions ending in $V_{2}\!\left(q_{0}\right)$
by $a$. Second, we have $q_{0}\in\delta\!\left(Q,a\right)$. Indeed,
otherwise all the transitions incoming to $q_{0}$ are labeled by
$b$, and there cannot be any $a$-transition ending in $V_{1}\!\left(q_{0}\right)$
because we know that the $b$-transition outgoing from $q_{0}$ is
not a loop. Thus $G$ is lifting, which is a contradiction.

Let the sets $B_{1},\dots,B_{\beta}$ denote the connected components
(not necessarily strongly connected) of $G\!\left[V_{1}\!\left(q_{0}\right)\right]$.
Note that maximum out-degree in $G\!\left[V_{1}\!\left(q_{0}\right)\right]$
is $1$. Let $e=\left(r,s\right),e'=\left(s,t\right)$ be consecutive
edges with $s,t\in V_{1}\!\left(q_{0}\right)$ and $r\in Q$. Then
the labeling $\delta$ has to satisfy
\[
e\mbox{ is labeled by }a\mbox{ }\Leftrightarrow\mbox{ }e'\mbox{ is labeled by }b.
\]

Indeed:
\begin{itemize}
\item The left-to-right implication follows easily from the fact that there
is no loop on $q_{0}$. 
\item As for the other one, suppose for a contradiction that both $e',e$
are labeled by $b$. We can always find a path $P$ (possibly trivial)
that starts outside $V_{1}\!\left(q_{0}\right)$ and ends in $r$.
Let $\overline{r}$ be the last vertex on $P$ that lies in $\delta\!\left(Q,a\right)$.
Such vertex exists because we have $V_{2}\!\left(q_{0}\right)\cup\left\{ q_{0}\right\} \subseteq\delta\!\left(Q,a\right)$
and $V_{3}\!\left(q_{0}\right)=\emptyset$. Now we can deduce that
$\delta\!\left(\overline{r},bb\right)\neq q_{0}$, which is a contradiction.
\end{itemize}

It follows that for each $B_{i}$ there are at most two possible colorings
of its inner edges (fix \emph{variant $\mathbf{0}$ }and\emph{ variant
$\mathbf{1}$} arbitrarily). Moreover, a labeling of any edge incoming
to $B_{i}$ enforces a particular variant for whole $B_{i}$.

Let the set $A$ contain the vertices $s\in V_{2}\!\left(q_{0}\right)\cup\left\{ q_{0}\right\} $
whose outgoing transitions lead both into $V_{1}\!\left(q_{0}\right)$.
Edges that start in vertices of $\left(V_{2}\!\left(q_{0}\right)\cup\left\{ q_{0}\right\} \right)\backslash A$
have only one possible way of coloring due to Lemma \ref{lem:no inc b},
while for each vertex of $A$ there are two possibilities. Now any
possible coloring can be described by $\left|A\right|+\beta$ Boolean
propositions:
\begin{eqnarray*}
\mathbf{x}_{s} & \equiv & e_{s}\mbox{ is labeled by }a\\
\mathbf{y}_{B} & \equiv & B\mbox{ is labeled according to variant }\mathbf{1}
\end{eqnarray*}
for each $s\in A$ and $B\in\left\{ B_{1},\dots,B_{\beta}\right\} $,
where $e_{s}$ is a particular edge outgoing from $s$. Moreover,
the claim $\delta\!\left(Q,abb\right)=\left\{ q_{0}\right\} $ can
be equivalently formulated as a conjunction of implications of the
form $\mathbf{x}_{s}\rightarrow\mathbf{y}_{B}$, so we reduce the
problem to 2-SAT.\qed

\end{proof}

\subsection{\label{sub:sink devs}NP-Complete Cases}

We introduce a method based on \emph{sink devices }to prove the NP-completeness
for a wide class of words even under the restriction to strongly connected
graphs. 

In the proofs below we use the notion of a \emph{partial finite automaton}
(\emph{PFA}), which can be defined as a triple $P=\left(Q,I,\delta\right)$,
where $Q$ is a finite set of states, $I$ is a finite alphabet, and
$\delta$ is a partial function $Q\times I\rightarrow Q$ which can
be naturally extended to $Q\times I^{\star}\rightarrow Q$. Again,
we write $r\overset{x}{\longrightarrow}\, s$ instead of $\delta\left(r,x\right)=s$.
We say that a PFA is \emph{incomplete }if there is some undefined
value of $\delta$. A \emph{sink state }in a PFA has a defined loop
for each letter.
\begin{definition}
\label{def:sink device}Let $w\in\left\{ a,b\right\} ^{\star}$. We
say that a PFA $B=\left(Q,\left\{ a,b\right\} ,\delta\right)$ is
a \emph{sink device }for $w$, if there exists $q_{0}\in Q$ such
that:
\begin{enumerate}
\item $\delta\!\left(q_{0},u\right)=q_{0}$ for each prefix $u$ of $w$,
\item $\delta\!\left(s,w\right)=q_{0}$ for each $s\in Q$.
\end{enumerate}
\end{definition}
Note that the trivial automaton consisting of a single sink state
is a sink device for any $w\in\left\{ a,b\right\} ^{\star}$. However,
we are interested in strongly connected sink devices that are incomplete.
In Lemma \ref{lem:pouziti dev} we show how to prove the NP-completeness
using a non-specific sink device in the general case of $w\in T_{4}$
and after that we construct explicit sink devices for a wide class
of words from $T_{4}$.
\begin{lemma}
\label{lem:pouziti dev}Let $w\in T_{4}$ and assume that there exists
a strongly connected incomplete sink device $B$ for $w$. Then $\mathrm{SRCW}_{2,\left\{ w\right\} }^{\mathcal{SC}}$
is NP-complete.\end{lemma}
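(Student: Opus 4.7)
My plan is a polynomial-time reduction from $\mathrm{SRCW}_{2,\{w\}}^{\mathcal{Z}}$, which Theorem~\ref{thm:gen clas} has just established to be NP-complete for $w \in T_4$, to $\mathrm{SRCW}_{2,\{w\}}^{\mathcal{SC}}$. Given an input graph $G = (Q,E)$ with sink state $v$, I build $G'$ by discarding the two loops at $v$, identifying $v$ with the distinguished vertex $q_0$ of the sink device $B$, and completing the undefined transitions of $B$ by edges that return into the $G$-part. Since $B$ is fixed, the construction is polynomial and every vertex of $G'$ keeps out-degree $2$; note that the identification is consistent, since the two loops at $v$ in $G$ are matched by the two self-loops at $q_0$ in $B$.

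The pivotal observation is that $w \in T_4$ uses both letters, so condition~1 of Definition~\ref{def:sink device}, applied to a prefix of $w$ ending with $a$ and one ending with $b$, forces both self-loops at $q_0$ inside $B$. Consequently, once a $w$-trajectory in $G'$ reaches $q_0$, it stays there, and $q_0$ plays in $G'$ exactly the synchronizing role that the sink $v$ played in $G$.

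For the forward direction I would start from a coloring $\delta$ of $G$ with $\delta(Q,w) = \{v\}$ and extend it to $G'$ by reusing $\delta$ on $G$-edges, using $B$'s internal labels on $B$-edges, and coloring the fill-in edges arbitrarily. A $w$-path starting at $s \in B$ remains inside $B$ and ends at $q_0$ by condition~2; a $w$-path starting at $s \in Q \setminus \{v\}$ follows $\delta$ until the first visit to $q_0$, after which the two self-loops pin it. For the backward direction, any reset coloring $\delta'$ of $G'$ must satisfy $\delta'(Q',w) = \{q_0\}$ because condition~1 forces $q_0 \in \delta'(Q',w)$; restricting $\delta'$ to $G$-edges and reinstating the two loops at $v$ yields a coloring of $G$ whose $w$-image is $\{v\}$, precisely because no $w$-path originating in $G$ ever probes $B \setminus \{q_0\}$.

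The main technical hurdle I anticipate is the strong connectivity of $G'$. The graph produced by the $T_4$ reduction is funnel-shaped: its clause vertices $\mathrm{C}_j$, $\mathrm{C}'_j$, $\mathrm{C}''_j$ are sources with no incoming $G$-edges, so no $G$-path returns to them from $v$. My plan is to exploit the incompleteness of $B$: each undefined transition becomes a bridge from $B$ back to a pre-chosen $G$-entry, and combined with the strong connectivity of $B$ this creates return paths into every layer of $G$. If a single copy of $B$ does not supply enough bridges to cover every source, I would either route several fill-in edges to a small auxiliary hub which in turn fans out to the sources, or chain several copies of $B$ through their $q_0$-identifications; each enlargement stays polynomial and leaves the two-way argument above untouched.
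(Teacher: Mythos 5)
Your high-level plan coincides with the paper's: graft the sink device onto the de-looped sink state of the $T_4$-hard instance from Theorem~\ref{thm:gen clas}, and use the undefined transitions of $B$ as bridges back into the instance to restore strong connectivity. But your ``pivotal observation'' is self-defeating. If condition~1 of Definition~\ref{def:sink device} really forced both self-loops at $q_0$, then $q_0$ would exhaust its out-degree on loops and no other vertex of $G'$ would be reachable from $q_0$; your reduction could never output a strongly connected graph, i.e.\ never a legal instance of $\mathrm{SRCW}_{2,\{w\}}^{\mathcal{SC}}$. (Read that way, the hypothesis of the lemma would even be unsatisfiable: a sink device with both loops at $q_0$ and at least two states is never strongly connected.) The property that actually does the work --- and that the paper's devices $\mathbf{D}(w)$ satisfy, while they carry at most one loop at $[\epsilon]$ --- is that $\delta_B(q_0,v)=q_0$ along defined transitions for every \emph{suffix} $v$ of $w$: a $w$-trajectory that first reaches $q_0$ after reading a prefix $u$ is returned to $q_0$ by the remaining suffix. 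Your forward direction (``the two self-loops pin it'') must be rebuilt on this suffix-return property, not on loops that cannot be there.

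Two further gaps remain even after that repair. First, strong connectivity is the technical heart of the lemma and you defer it to an unspecified ``hub'': the hub's vertices are vertices of $G'$, so they must themselves be sent to $q_0$ by $w$ under the coloring you exhibit, and must not admit rogue colorings in the converse direction. The paper's entire construction is this hub --- the chains $\mathrm{F}_{i,0}\to\cdots\to\mathrm{F}_{i,\beta}\to q_0$, fed by the single transition left undefined in $B$, whose length $\beta$ and labeling are tailored to the decomposition $w=a^{\alpha}b^{\beta}au$ --- and nothing analogous appears in your proposal. Second, in the backward direction you assert that any reset coloring $\delta'$ of $G'$ satisfies $\delta'(Q',w)=\{q_0\}$ ``because condition~1 forces $q_0\in\delta'(Q',w)$,'' and that no $w$-path from the $G$-part probes $B\setminus\{q_0\}$. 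Conditions 1--2 are properties of the fixed partial function $\delta_B$; an adversarial coloring of $G'$ may relabel the edges of $B$, of the bridges, and of the hub arbitrarily, so they give you no control, and a path that reaches $q_0$ does continue into $B$ along $q_0$'s out-edges. Both claims need arguments tied to the concrete structure of the gadget rather than to the definition of the sink device.
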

\begin{proof}
We assume that $w$ starts by $a$ and write $w=a^{\alpha}b^{\beta}au$
for $\alpha,\beta\ge1$ and $u\in\left\{ a,b\right\} ^{\star}$. Denote
$B=\left(Q_{B},\left\{ a,b\right\} ,\delta_{B}\right)$. For a reduction
from W-SAT, take an instance $X,\Phi$\textbf{ }with the notation
used before, assuming that each $x\in X$ occurs in $\Phi$. We construct
a graph $\overline{G}_{w,\Phi}=\left(\overline{Q},\overline{E}\right)$
as follows. Let $q_{1}\in Q_{B}$ have an undefined outgoing transition,
and let $B'$ be an automaton obtained from $B$ by arbitrarily defining
all the undefined transitions except for one transition outgoing from
$q_{1}$. Let $G_{B'}$ be the underlying graph of $B'$. By Theorem
\ref{thm:gen clas}, $\mathrm{SRCW}_{2,\left\{ w\right\} }^{\mathcal{Z}}$
is NP-complete, so it admits a reduction from W-SAT. Let $G_{w,\Phi}=\left(Q,E\right)$
be the graph obtained from such reduction, removing the loop on the
sink state $q'_{0}\in Q$. Let $s_{1},\dots,s_{\left|Q\right|-1}$
be an enumeration of all the states of $G_{w,\Phi}$ different from
$q'_{0}$. Then we define $\overline{G}_{w,\Phi}$ as shown in Fig.
\ref{fig:scAll}. We merge the state $q'_{0}\in Q$ with the state
$q_{0}\in Q_{B}$, which is fixed by the definition of a sink device.
\textbf{}
\begin{figure}
\begin{centering}
\includegraphics{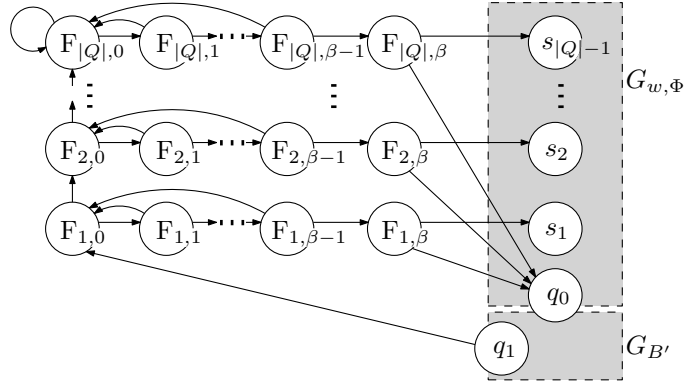}
\par\end{centering}

\textbf{\caption{\label{fig:scAll}The graph $\overline{G}_{w,\Phi}$}
}

\end{figure}

First, let there be a coloring $\overline{\delta}$ of $\overline{G}_{w,\Phi}$
such that $\left|\overline{\delta}\!\left(\overline{Q},w\right)\right|=1$.
It follows easily that $\overline{\delta}$, restricted to $Q$, encodes
a coloring $\delta$ of $G_{w,\Phi}$ such that $\left|\delta\!\left(Q,w\right)\right|=1$.
The choice of $G_{w,\Phi}$ guarantees that there is a satisfying
assignment $\xi$ for $\Phi$.

On the other hand, let $\xi$ be a satisfying assignment of $\Phi$.
By the choice of $G_{w,\Phi}$, there is a coloring $\delta$ of $G_{w,\Phi}$
such that $\left|\delta\!\left(Q,w\right)\right|=1$. We use the following
coloring of $\overline{G}_{w,\Phi}$: The edges outgoing from $s_{1},\dots,s_{\left|Q\right|-1}$
are colored according to $\delta$. The edges within $G_{B'}$ are
colored according to $B'$. The edge $q_{1}\longrightarrow\mathrm{F}_{1,0}$
is colored by $b$. All the other edges incoming to the states $\mathrm{F}_{1,0},\dots,\mathrm{F}_{\left|Q\right|,0}$,
together with the edges of the form $\mathrm{F}_{i,\beta}\longrightarrow q_{0}$,
are colored by $a$, while the remaining ones are colored by $b$.
\qed

\end{proof}
For any $w\in\left\{ a,b\right\} ^{\star}$ we construct a strongly
connected sink device $\mathbf{D}\!\left(w\right)=\left(Q_{w},\left\{ a,b\right\} ,\delta_{w}\right)$.
However, for some words $w\in T_{4}$ (e.g. for $w=abab$) the device
$\mathbf{D}\!\left(w\right)$ is not incomplete and thus is not suitable
for the reduction above. Take any $w\in\left\{ a,b\right\} ^{\star}$
and let $\mathfrak{C}_{w}^{\mathrm{P}},\mathfrak{C}_{w}^{\mathrm{S}},\mathfrak{C}_{w}^{\mathrm{F}}$
be the sets of all prefixes, suffixes and factors of $w$ respectively,
including the empty word $\epsilon$. Let 
\begin{eqnarray*}
Q_{w} & = & \left\{ \left[u\right]\mid u\in\mathfrak{C}_{w}^{\mathrm{F}},\, v\notin\mathfrak{C}_{w}^{\mathrm{S}}\mbox{ for each nonempty prefix }v\mbox{ of }u\right\} ,
\end{eqnarray*}
while the partial transition function $\delta_{w}$ consists of the
following transitions: 
\begin{enumerate}
\item \label{enu: def of D(w)}$\left[u\right]\overset{x}{\longrightarrow}\left[ux\right]$
whenever $\left[u\right],\left[ux\right]\in Q_{w}$,
\item $\left[u\right]\overset{x}{\longrightarrow}\left[\epsilon\right]$
whenever $ux\in\mathfrak{C}_{w}^{\mathrm{S}}$,
\item $\left[u\right]\overset{x}{\longrightarrow}\left[\epsilon\right]$
whenever $\left[ux\right]\notin Q_{w}$, $ux\notin\mathfrak{C}_{w}^{\mathrm{S}}$,
and $vx\in\mathfrak{C}_{w}^{\mathrm{P}}$ for a suffix $v$ of $u$.\end{enumerate}
\begin{lemma}
\label{lem:D je sink}For any $w\in\left\{ a,b\right\} ^{\star}$,
$\mathbf{D}\!\left(w\right)$ is a strongly connected sink device.
\end{lemma}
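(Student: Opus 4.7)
The plan is to take $q_0=[\epsilon]$ and verify, in turn, strong connectivity, condition~(2) of Definition~\ref{def:sink device}, and condition~(1). The backbone is a single structural invariant: whenever the computation of $\mathbf{D}(w)$ starting from a state $[v_0]\in Q_w$ on input $t_1\cdots t_k$ is well-defined, the state reached, written $[u_k]$, satisfies that $u_k$ is a suffix of $v_0 t_1\cdots t_k$. I would prove this by induction on $k$: rule~(1) appends $t_{k+1}$ to $u_k$, while rules~(2) and~(3) reset $u_{k+1}$ to $\epsilon$, and all three cases preserve the suffix property.

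For strong connectivity, I would first argue that for any $[u]\in Q_w$ every nonempty prefix $u'$ of $u$ is itself a state: the nonempty prefixes of $u'$ are nonempty prefixes of $u$, hence outside $\mathfrak{C}_w^{\mathrm{S}}$, and $u'$ is certainly a factor of $w$. Thus rule~(1) fires at every step of reading $u$ from $[\epsilon]$, giving $\delta_w([\epsilon],u)=[u]$. Combined with condition~(2), which sends every state back to $[\epsilon]$ via $w$, this yields strong connectivity.

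Condition~(2) is the heart of the argument. Apply the invariant with $t=w$: the state $[u_{n-1}]$ reached just before the last letter $w_n$ is in $Q_w$, so $u_{n-1}\in\mathfrak{C}_w^{\mathrm{F}}$; since $w$ itself is a nonempty prefix of $w$ belonging to $\mathfrak{C}_w^{\mathrm{S}}$, we have $[w]\notin Q_w$, and hence $|u_{n-1}|<n$. Therefore $u_{n-1}w_n$ has length at most $n$ and is a suffix of $v_0 w$, forcing $u_{n-1}w_n$ to be a suffix of $w$, i.e., $u_{n-1}w_n\in\mathfrak{C}_w^{\mathrm{S}}$. Thus rule~(2) must fire at the final letter, delivering $\delta_w([v_0],w)=[\epsilon]$. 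Condition~(1), taken via the $s=q_0$ specialization of condition~(2), yields $\delta_w([\epsilon],w)=[\epsilon]$, while the remaining prefixes of $w$ are handled by the same invariant together with the observation that the computation from $[\epsilon]$ reads one letter of $w$ at a time via rule~(1) and resets to $[\epsilon]$ precisely at the moments a suffix of $w$ has been completed.

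The main obstacle I anticipate is verifying that the computation on input $w$ is never stuck at an undefined transition. When rules~(1) and~(2) both fail at a state $[u_k]$ on letter $w_{k+1}$, the needed suffix $v''$ of $u_k$ with $v''w_{k+1}\in\mathfrak{C}_w^{\mathrm{P}}$ comes from identifying $u_k$ with a contiguous factor of $w$ ending at position $k$: in the pre-reset phase $u_k=v_0 w_1\cdots w_k$ and $v''=w_1\cdots w_k$ works directly since $v''w_{k+1}$ is the prefix $w_1\cdots w_{k+1}$ of $w$; after a reset at step $j$ we have $u_k=w_{j+1}\cdots w_k$, and a suitable $v''$ is extracted from the border structure of $w$ forced by the failure of rule~(1), namely from the nonempty prefix of $u_k w_{k+1}$ that lies in $\mathfrak{C}_w^{\mathrm{S}}$. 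This case analysis is the technically delicate part of the proof.
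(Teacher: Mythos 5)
The paper itself omits the proof of this lemma, so I am judging your argument on its own terms. Your skeleton is right: $q_{0}=\left[\epsilon\right]$, the invariant that the current state-word is a suffix of $v_{0}t_{1}\cdots t_{k}$, the prefix-closedness of $Q_{w}$ giving $\delta_{w}\!\left(\left[\epsilon\right],u\right)=\left[u\right]$ for reachability, and the length argument forcing rule (2) to fire on the last letter of $w$ are all correct. But there are two genuine gaps. The first concerns condition (1) of Definition \ref{def:sink device}. As literally stated (``for each \emph{prefix} $u$ of $w$'') it is \emph{false} for $\mathbf{D}\!\left(w\right)$ with $q_{0}=\left[\epsilon\right]$: for $w=abba$ one has $\left[\epsilon\right]\overset{a}{\longrightarrow}\left[\epsilon\right]\overset{b}{\longrightarrow}\left[b\right]$ by rules (2) and (1), so $\delta_{w}\!\left(\left[\epsilon\right],ab\right)\neq\left[\epsilon\right]$; indeed the prefix version would force loops on both letters at $q_{0}$, and no state of $\mathbf{D}\!\left(w\right)$ carries two loops. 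Your sentence that the remaining prefixes are ``handled'' because the computation ``resets to $\left[\epsilon\right]$ precisely at the moments a suffix of $w$ has been completed'' is not a proof of (1) -- it is, read carefully, an admission that (1) fails at the other prefixes. What is true, and what the gluing in Lemma \ref{lem:pouziti dev} actually requires, is the \emph{suffix} analogue $\delta_{w}\!\left(\left[\epsilon\right],u\right)=\left[\epsilon\right]$ for every suffix $u$ of $w$; that statement does follow from your invariant by exactly the computation you give for condition (2), taking $v_{0}=\epsilon$ and the input a suffix of $w$. You should prove that version explicitly and note the discrepancy with the stated definition, rather than assert the prefix version.

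The second gap is the definedness of $\delta_{w}\!\left(\cdot,w\right)$, which you rightly flag as the delicate point but do not close, and your sketch of the post-reset case would not work as written: you propose to extract the suffix $v''$ of $u_{k}$ with $v''w_{k+1}\in\mathfrak{C}_{w}^{\mathrm{P}}$ from ``the nonempty prefix of $u_{k}w_{k+1}$ that lies in $\mathfrak{C}_{w}^{\mathrm{S}}$'', but that object is a suffix of $w$, and nothing converts it into a word whose extension by $w_{k+1}$ is a prefix of $w$. The case is in fact vacuous: after the last reset at position $j$ you have $u_{k}=w_{j+1}\cdots w_{k}$, so $u_{k}w_{k+1}=w_{j+1}\cdots w_{k+1}\in\mathfrak{C}_{w}^{\mathrm{F}}$; since $\left[u_{k}\right]\in Q_{w}$, the only nonempty prefix of $u_{k}w_{k+1}$ that could lie in $\mathfrak{C}_{w}^{\mathrm{S}}$ is $u_{k}w_{k+1}$ itself, which is excluded exactly when rule (2) fails, so $\left[u_{k}w_{k+1}\right]\in Q_{w}$ and rule (1) fires. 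Hence rule (3) is ever needed only in the pre-reset phase, where your choice $v''=w_{1}\cdots w_{k}$ is correct. With these two repairs the argument closes.
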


\begin{lemma}
\label{lem:conds for incompl}Suppose that $w\in\left\{ a,b\right\} ^{\star}$
starts by $x$, where $\left\{ x,y\right\} =\left\{ a,b\right\} $.
If there is $u\in\left\{ a,b\right\} ^{\star}$ satisfying all the
following conditions, then $\mathbf{D}\!\left(w\right)$ is incomplete:
\begin{enumerate}
\item $\left[u\right]\in Q_{w}$,
\item $uy\notin\mathfrak{C}_{w}^{\mathrm{F}}$,
\item for each nonempty suffix $v$ of $uy$, $v\notin\mathfrak{C}_{w}^{\mathrm{P}}$.
\end{enumerate}
\end{lemma}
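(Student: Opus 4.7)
The plan is to exhibit a concrete undefined transition in $\mathbf{D}(w)$, namely $\delta_w([u],y)$. Condition 1 guarantees that $[u]\in Q_w$ is a legitimate state, so the claim that this particular value is missing will directly establish that $\mathbf{D}(w)$ is incomplete. All the work lies in verifying that none of the three clauses in the definition of $\delta_w$ can supply a $y$-transition out of $[u]$, and to do so I will simply take those three clauses in turn and use the three hypotheses on $u$ as negations.

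Clause 1 of the definition would produce $[u]\overset{y}{\longrightarrow}[uy]$ only if $[uy]\in Q_w$, which in particular requires $uy\in\mathfrak{C}_w^{\mathrm{F}}$. Condition 2 forbids exactly this. Clause 2 would produce $[u]\overset{y}{\longrightarrow}[\epsilon]$ if $uy\in\mathfrak{C}_w^{\mathrm{S}}$; using the obvious containment $\mathfrak{C}_w^{\mathrm{S}}\subseteq\mathfrak{C}_w^{\mathrm{F}}$, condition 2 rules this out as well. Thus conditions 1 and 2 handle the first two clauses.

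Clause 3 is where condition 3 is used. It would supply a transition provided that $[uy]\notin Q_w$, $uy\notin\mathfrak{C}_w^{\mathrm{S}}$, and $vy\in\mathfrak{C}_w^{\mathrm{P}}$ for some suffix $v$ of $u$. The crucial observation is that for any suffix $v$ of $u$ the word $vy$ is a nonempty suffix of $uy$ (nonempty even when $v=\epsilon$, because the letter $y$ is appended). Condition 3 then immediately gives $vy\notin\mathfrak{C}_w^{\mathrm{P}}$, contradicting the requirement of the clause.

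Since none of the three clauses can define $\delta_w([u],y)$, that value is undefined and $\mathbf{D}(w)$ is incomplete. The whole argument is a mechanical case split through the definition of $\delta_w$; the only mild pitfall is noticing in the treatment of clause 3 that $vy$ is nonempty even when $v$ itself is the empty suffix, so that condition 3 is indeed applicable.
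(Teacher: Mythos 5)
Your proof is correct: showing that none of the three clauses defining $\delta_{w}$ can supply a value for $\delta_{w}\!\left(\left[u\right],y\right)$ is exactly the intended argument (the paper omits this proof for space), and you handle the one delicate point, namely that $vy$ is a nonempty suffix of $uy$ even for $v=\epsilon$, so condition 3 blocks the third clause. Nothing is missing.
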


\begin{theorem}
If a word $w\in T_{4}$ satisfies some of the following conditions,
then $\mathrm{SRCW}_{2,\left\{ w\right\} }^{\mathcal{SC}}$ is NP-complete:
\begin{enumerate}
\item $w$ is of the form $w=x\overline{w}x$ for $\overline{w}\in\left\{ a,b\right\} ^{\star},x\in\left\{ a,b\right\} $,
\item $w$ is of the form $w=x\overline{w}y$ for $\overline{w}\in\left\{ a,b\right\} ^{\star},x,y\in\left\{ a,b\right\} ,x\neq y$,\\
and $x^{k}y^{l}x\in\mathfrak{C}_{w}^{\mathrm{F}},\, x^{k+1}\notin\mathfrak{C}_{w}^{\mathrm{F}},\, y^{l+1}\notin\mathfrak{C}_{w}^{\mathrm{F}}$
for some $k,l\ge1$.
\end{enumerate}
\end{theorem}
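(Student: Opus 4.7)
The plan is to reduce both cases to Lemma~\ref{lem:pouziti dev}: for every $w$ satisfying condition~1 or 2 it is enough to exhibit a strongly connected incomplete sink device for $w$, and by Lemma~\ref{lem:D je sink} the canonical candidate $\mathbf{D}(w)$ is already a strongly connected sink device, so the remaining work is to verify the three conditions of Lemma~\ref{lem:conds for incompl} for a suitably chosen witness $u \in \{a,b\}^{\star}$.

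For condition~1, I would write $w = x\overline{w}x$ and let $k \ge 1$ be the largest integer with $y^{k} \in \mathfrak{C}_w^{\mathrm{F}}$; such a $k$ exists because $w \in T_4$ contains the letter $y$. Take $u = y^{k}$. Every nonempty prefix of $u$ is a positive power $y^{i}$ of $y$, and since $w$ ends in $x \neq y$ no such $y^{i}$ lies in $\mathfrak{C}_w^{\mathrm{S}}$, so $[y^{k}] \in Q_w$. Maximality of $k$ yields $uy = y^{k+1} \notin \mathfrak{C}_w^{\mathrm{F}}$. Finally, every nonempty suffix of $y^{k+1}$ is again a power of $y$ and so not a prefix of $w$, which begins with $x$. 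Lemma~\ref{lem:conds for incompl} then forces $\mathbf{D}(w)$ to be incomplete, and Lemma~\ref{lem:pouziti dev} delivers NP-completeness.

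For condition~2, I would write $w = x\overline{w}y$ and let $k,l \ge 1$ be the integers given by the hypothesis, so that $x^{k}y^{l}x \in \mathfrak{C}_w^{\mathrm{F}}$ while $x^{k+1}$ and $y^{l+1}$ are not factors. The natural candidate is $u = x^{k}y^{l}$: being a prefix of the guaranteed factor $x^{k}y^{l}x$, $u$ is itself a factor of $w$, and $uy = x^{k}y^{l+1}$ fails to be a factor because $y^{l+1}$ already does, which disposes of condition~(2) of Lemma~\ref{lem:conds for incompl}. For condition~(3), the nonempty suffixes of $x^{k}y^{l+1}$ are either pure powers of $y$, which cannot be prefixes of $w$ since $w$ begins with $x$, or of the form $x^{i}y^{l+1}$, which cannot be prefixes either because any such prefix of $w$ would embed the forbidden power $y^{l+1}$.

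The delicate step, and in my view the main obstacle, is condition~(1), that is, showing $[x^{k}y^{l}] \in Q_w$: one must rule out that any prefix $x^{k}y^{j}$ for $1 \le j \le l$ of $u$ lies in $\mathfrak{C}_w^{\mathrm{S}}$. Since $w$ ends in $y$, such a coincidence forces $w$ to terminate in $x^{k}y^{j}$, and I expect the combinatorial constraints $x^{k+1},y^{l+1} \notin \mathfrak{C}_w^{\mathrm{F}}$ together with the shape $w = x\overline{w}y$ to exclude this in the intended regime. In any residual subcases where $\mathbf{D}(w)$ nonetheless remains complete, I would fall back to hand-crafting a larger strongly connected incomplete PFA in which $w$ acts as a reset word, using the factor $x^{k}y^{l}x$ as the backbone of a cycle routed through the intended sink state.
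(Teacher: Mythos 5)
Your plan is exactly the paper's proof: invoke Lemmas \ref{lem:pouziti dev} and \ref{lem:D je sink} and then verify the three conditions of Lemma \ref{lem:conds for incompl}, with witness $u=y^{m}$ ($m$ the maximal power of $y$ occurring in $w$) in case 1 and $u=x^{k}y^{l}$ in case 2; the paper compresses the verification into ``it is straightforward to check''. Your case 1 is complete and correct, and your verification of conditions (2) and (3) of Lemma \ref{lem:conds for incompl} in case 2 is also correct.

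The step you single out as delicate, namely $[x^{k}y^{l}]\in Q_{w}$ in case 2, is a genuine gap, and your hope that the hypotheses exclude the bad coincidence is not fulfilled. Take $w=abab$: it lies in $T_{4}$ and satisfies condition 2 with $x=a$, $y=b$, $k=l=1$, since $aba\in\mathfrak{C}_{w}^{\mathrm{F}}$ while $aa,bb\notin\mathfrak{C}_{w}^{\mathrm{F}}$. Yet $x^{k}y^{l}=ab$ is a suffix of $abab$, so by the definition of $Q_{w}$ we get $[ab]\notin Q_{abab}$ and condition (1) of Lemma \ref{lem:conds for incompl} fails for this witness; in fact one computes $Q_{abab}=\left\{ [\epsilon],[a]\right\} $ with all four transitions defined, so $\mathbf{D}(abab)$ is complete --- precisely the example the paper itself gives of a word on which the sink-device method breaks down. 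Thus the witness $u=x^{k}y^{l}$ works only under the additional assumption that no prefix $x^{k}y^{j}$ with $1\le j\le l$ is a suffix of $w$, equivalently that the maximal run of $x$ immediately preceding the terminal run of $y$ in $w$ has length strictly less than $k$; with that extra hypothesis your argument closes, and without it neither your verification nor the paper's ``straightforward check'' goes through. Your proposed fallback of hand-crafting a larger incomplete sink device is not carried out, and for $w=abab$ the paper offers no such device either (it explicitly leaves such words unresolved), so this should be treated as a real obstruction rather than a routine residual subcase.
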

\begin{proof}
Due to Lemmas \ref{lem:pouziti dev} and \ref{lem:D je sink}, it
is enough to show that $\mathbf{D}\!\left(w\right)$ is incomplete.
Let $m\ge1$ be the largest integer such that $y^{m}$ is a factor
of $w$. It is straightforward to check that $u=y^{m}$ (in the first
case) or $u=x^{k}y^{l}$ (in the second case) satisfies the three
conditions from Lemma \ref{lem:conds for incompl}.
\end{proof}

\section{Conclusion and Future Work}

We have completely characterized the binary words $w$ that make the
computation of road coloring NP-complete if some of them is required
to be the reset word for a coloring of a given graph. Except for $w=a^{k}b$
and $w=a^{k}$ with $k\ge1$, each $w\in\left\{ a,b\right\} ^{\star}$
has this property. We have proved that if we require strong connectivity,
the case $w=abb$ becomes solvable in polynomial time. For any $w$
such that the first letter equals to the last one and both $a,b$
occur in $w$, we have proved that the NP-completeness holds even
under this requirement. The main goals of the future research are:
\begin{itemize}
\item Complete the classification of binary words in the strongly connected
case.
\item Give the classifications of words over non-binary alphabets.
\item Study SRCW restricted to non-singleton sets of words.
\end{itemize}
\bibliographystyle{splncs03}
\bibliography{ruco}

\end{document}